  \newcommand{\defproblem}[3]{
  \vspace{2mm}
\noindent\fbox{
  \begin{minipage}{0.96\textwidth}
  #1\\
  {\bf{Input:}} #2  \\
  {\bf{Output:}} #3
  \end{minipage}
  }
  \vspace{2mm}
}
\newcommand{\F}{\mathcal{F}}
\newcommand{\MAW}{\textsc{MAW-SequenceComparison}}
\newcommand{\CMAW}{\textsc{MAW-CircularSequenceComparison}}
\def\dd{\mathinner{.\,.}}
\newcommand{\cO}{\mathcal{O}}
\newcommand{\SA}{\textsf{SA}}
\newcommand{\iSA}{\textsf{iSA}}
\newcommand{\LCP}{\textsf{LCP}}
\newcommand{\LCE}{\textsf{LCE}}
\newcommand{\lcp}{\textsf{lcp}}
\newcommand{\mina}{\mathcal{M}}
\newcommand{\LW}{\textsf{LW}}
\begin{document}
%
%\frontmatter          % for the preliminaries
%
%\pagestyle{headings}  % switches on printing of running heads
%\addtocmark{Hamiltonian Mechanics} % additional mark in the TOC
%

\sloppy
%\linenumbers

\title{Linear-Time Sequence Comparison Using Minimal Absent Words \& Applications}

\author{Maxime Crochemore\inst{1}
\and Gabriele Fici\inst{2}
\and Robert Merca\c{s}\inst{1,3}
\and Solon P.\ Pissis\inst{1}
}

\institute{ 
$\!^1\ $Department of Informatics, King's College London, UK\\
$\!^2\ $Dipartimento di Matematica e Informatica, Universit{\`a} di Palermo, Italy\\
$\!^3\ $Department of Computer Science, Kiel University, Germany\\
\email{maxime.crochemore@kcl.ac.uk, gabriele.fici@unipa.it, rgm@informatik.uni-kiel.de, solon.pissis@kcl.ac.uk} \\ 
}

\titlerunning{Linear-Time Sequence Comparison Using Minimal Absent Words}
\authorrunning{M. Crochemore, G. Fici, R. Merca\c{s}, S. Pissis}

\maketitle
\vspace{-.7em}
\begin{abstract}
Sequence comparison is a prerequisite to virtually all comparative genomic analyses.
It is often realized by sequence alignment techniques, which are computationally expensive.
This has led to increased research into alignment-free techniques, which are based on measures referring to the composition
of sequences in terms of their constituent patterns. These measures, such as $q$-gram distance, are usually computed in time linear 
with respect to the length of the sequences. 
In this article, we focus on the complementary idea: how two sequences can be efficiently compared based
on information that does not occur in the sequences.  
A word is an {\em absent word} of some sequence if it does not occur in the sequence.
An absent word is {\em minimal} if all its proper factors occur in the sequence.
Here we present the first linear-time and linear-space algorithm 
to compare two sequences by considering {\em all} their minimal absent words.
In the process, we present results of combinatorial interest, and also extend the proposed techniques to compare circular sequences.

\keywords{algorithms on strings, sequence comparison, alignment-free comparison, absent words, forbidden words, circular words.}
\end{abstract}

%%%%%%%%%%%%%%%%%%%%%%%%%%%%%%%%%%%%%%%%%%%%%%%%%%%%%%%%%%%%%%%%%%%%%%%%%%%%%%%%%%%%%%%%%%%%%%%%%%%%%%
\section{Introduction}
%%%%%%%%%%%%%%%%%%%%%%%%%%%%%%%%%%%%%%%%%%%%%%%%%%%%%%%%%%%%%%%%%%%%%%%%%%%%%%%%%%%%%%%%%%%%%%%%%%%%%%

Sequence comparison is an important step in many basic tasks in bioinformatics, from phylogenies reconstruction to genomes assembly. 
It is often realized by sequence alignment techniques, which are computationally expensive, requiring quadratic time in the length of the sequences. 
This has led to increased research into \textit{alignment-free} techniques. 
Hence standard notions for sequence comparison are gradually being complemented and in some cases replaced by alternative ones~\cite{Domazet-Loso:2009:EEP:1671627.1671629}. 
One such notion is based on comparing the words that are absent in each sequence~\cite{nullrly}. 
A word is an \textit{absent word} (or a forbidden word) of some sequence if it does not occur in the sequence. Absent words represent a type of \textit{negative information}: information about what does not occur in the sequence. 

Given a sequence of length $n$, the number of absent words of length at most $n$ is exponential in $n$. However, the number of certain classes of absent words is only linear in $n$.
This is the case for \textit{minimal absent words}, that is, absent words in the sequence whose all proper factors occur in the sequence~\cite{BeMiReSc00}. 
An upper bound on the number of minimal absent words is known to be $\cO(\sigma n)$~\cite{Crochemore98automataand,Mignosi02}, where $\sigma$ is the size of the alphabet $\Sigma$. 
Hence it may be possible to compare sequences in time proportional to their lengths, for a fixed-sized alphabet, instead of proportional to the product of their lengths. 
In what follows, we consider sequences on a {\em fixed-sized alphabet} since the most commonly studied alphabet is $\Sigma=\{\texttt{A,C,G,T}\}$.

An $\cO(n)$-time and $\cO(n)$-space algorithm for computing all minimal absent words on a fixed-sized alphabet based on the construction of suffix automata 
was presented in~\cite{Crochemore98automataand}.  
The computation of minimal absent words based on the construction of suffix arrays was considered in~\cite{Pinho2009}; although this algorithm has a linear-time performance in practice, 
the worst-case time complexity is $\cO(n^2)$. New $\cO(n)$-time and $\cO(n)$-space suffix-array-based algorithms were presented in~\cite{DBLP:conf/isit/FukaeOM12,MAW,PPAM2015} to bridge this unpleasant gap. 
An implementation of the algorithm presented in~\cite{MAW} is currently, and to the best of our knowledge, the fastest available for the computation of minimal absent words.
A more space-efficient solution to compute all minimal absent words in time $\cO(n)$ was also presented in~\cite{Belazzougui2013}.

In this article, we consider the problem of comparing two sequences $x$ and $y$ of 
respective lengths $m$ and $n$, using their sets of minimal absent words.
In~\cite{Chairungsee2012109}, Chairungsee and Crochemore 
introduced a measure of similarity between two sequences based on the notion of minimal absent words. 
They made use of a length-weighted index to provide a measure of similarity between two sequences, using sample sets of their minimal absent words, by considering the length of
each member in the symmetric difference of these sample sets. This measure can be trivially computed in time and space $\cO(m + n)$ provided that these sample sets contain
minimal absent words of some bounded length $\ell$. For unbounded length, the same measure can be trivially computed in time $\cO(m^2 + n^2)$: for a given sequence, 
the cumulative length of all its minimal absent words can grow {\em quadratically} with respect to the length of the sequence.

The same problem can be considered for two {\em circular} sequences. The measure of similarity of  Chairungsee and Crochemore can be used in this setting provided that one extends the 
definition of minimal absent words to circular sequences. In Section~\ref{sec:circ_seq_comp}, we give a definition of minimal absent words for a circular sequence from the Formal Language 
Theory point of view. We believe that this definition may also be of interest from the point of view of Symbolic Dynamics, which is the original context in 
which minimal absent words have been introduced~\cite{BeMiReSc00}.

\medskip
\noindent \textbf{Our Contribution.} Here we make the following threefold contribution:%
\vspace{-0.2cm}
\begin{description}
 \item[a)] We present an $\cO(m + n)$-time and $\cO(m + n)$-space algorithm to compute the similarity measure introduced by Chairungsee and Crochemore by considering {\em all} minimal absent words of two  sequences $x$ and $y$ of lengths $m$ and $n$, respectively; thereby
 showing that it is indeed possible to compare two sequences in time proportional to their lengths (Section~\ref{sec:seq_comp}).
 \item[b)]  We show how this algorithm can be applied to compute this similarity measure for two circular sequences $x$ and $y$ of lengths $m$ and $n$, respectively, in the same time and space complexity as a result of the extension of the definition of minimal absent words to circular sequences (Section~\ref{sec:circ_seq_comp}).
 \item[c)]  We provide an open-source code implementation of our algorithms and investigate potential applications of our theoretical findings (Section~\ref{sec:imp_app}).
\end{description}

%%%%%%%%%%%%%%%%%%%%%%%%%%%%%%%%%%%%%%%%%%%%%%%%%%%%%%%%%%%%%%%%%%%%%%%%%%%%%%%%%%%%%%%%%%%%%%%%%%%%%%
\section{Definitions and Notation}\label{sec:prem}
%%%%%%%%%%%%%%%%%%%%%%%%%%%%%%%%%%%%%%%%%%%%%%%%%%%%%%%%%%%%%%%%%%%%%%%%%%%%%%%%%%%%%%%%%%%%%%%%%%%%%%

  We begin with basic definitions and notation.
  Let $y=y[0]y[1]\dd y[n-1]$ be a \textit{word} of \textit{length} $n=|y|$
over a finite ordered \textit{alphabet} $\Sigma$ of size 
$\sigma = |\Sigma|=\cO(1)$.
For two positions $i$ and $j$ on $y$, we denote by $y[i\dd j]=y[i]\dd y[j]$ the \textit{factor} 
(sometimes called \textit{substring}) of $y$ that 
starts at position $i$ and ends at position $j$ (it is empty if $j < i$), and by $\varepsilon$ 
the \textit{empty word}, word of length 0. 
  We recall that a prefix of $y$ is a factor that starts at position 0 
($y[0\dd j]$) and a suffix is a factor that ends at position $n-1$ 
($y[i\dd n-1]$), and that a factor of $y$ is a \textit{proper} factor if 
it is not $y$ itself. The set of all the factors of the word $y$ is denoted by $\F_y$.

  Let $x$ be a word of length $0<m\leq n$. 
  We say that there exists an \textit{occurrence} of $x$ in $y$, or, more 
simply, that $x$ \textit{occurs in} $y$, when $x$ is a factor of $y$.
  Every occurrence of $x$ can be characterised by a starting position in $y$. 
  Thus we say that $x$ occurs at the \textit{starting position} $i$ in $y$ 
when $x=y[i \dd i + m - 1]$.
  Opposingly, we say that the word $x$ is an \textit{absent word} of
$y$ if it does not occur in $y$.
  The absent word $x$ of $y$ is \textit{minimal} if and only if all its proper factors 
occur in $y$. The set of all minimal absent words for a word $y$ is denoted by $\mina_y$. 
For example, if $y=abaab$, then $\mina_y=\{aaa, aaba, bab, bb\}$. In general, if we suppose that 
all the letters of the alphabet appear in $y$ of length $n$, the length of a minimal absent word of $y$ lies between $2$ and $n+1$. 
It is equal to $n+1$ if and only if $y$ is the catenation of $n$ copies of the same letter. 
So, if $y$ contains occurrences of at least two different letters, the length of a minimal absent word for $y$ is bounded from above by $n$.

A {\em language} over the alphabet $\Sigma$ is a set of finite words over $\Sigma$. A language is {\em regular} if it is recognized by a finite state automaton. A language is {\em factorial} if it contains all the factors of its words. A language is {\em antifactorial} if no word in the language is a proper factor of another word in the language. Given a word $x$, the language \emph{generated} by $x$ is the language $x^*=\{x^k\mid k\geq 0\}=\{\varepsilon, x, xx, xxx, \ldots\}$. The \emph{factorial closure} of a language $L$ is the language $\F_L=\{\F_y\mid y\in L\}$.  Given a factorial language $L$, one can define the (antifactorial) language of minimal absent words for $L$ as $\mina_L=\{aub\mid aub\notin L, au,ub\in L\}$. 
Notice that $\mina_L$ is not the same language as the union of $\mina_x$ for $x\in L$. 

We denote by \SA{} the {\em suffix array} of $y$ of length $n$, that is, an integer array of size $n$
storing the starting positions of all (lexicographically) sorted suffixes of $y$, i.e.~for all 
$1 \leq  r < n$ we have $y[\SA{}[r-1] \dd n-1] < y[\SA{}[r] \dd n - 1]$~\cite{SA}.
  Let \lcp{}$(r, s)$ denote the length of the longest common prefix between
$y[\SA{}[r] \dd n - 1]$ and $y[\SA{}[s] \dd n - 1]$ 
for all positions $r$, $s$ on $y$, and $0$ otherwise.
  We denote by \LCP{} the {\em longest common prefix} array of $y$ defined by 
\LCP{}$[r]=\lcp{}(r-1, r)$ for all $1 \leq r < n$, and 
\LCP{}$[0] = 0$. The inverse \iSA{} of the array \SA{} is defined by 
$\iSA{}[\SA{}[r]] = r$, for all $0 \leq r < n$. It is known that
  \SA{}~\cite{Nong:2009:LSA:1545013.1545570}, \iSA{}, and 
\LCP{}~\cite{indLCP} of a word of length $n$  can be computed in time and space $\cO(n)$.

In what follows, as already proposed in~\cite{MAW}, for every word $y$, the set of minimal words 
associated with $y$, denoted by $\mina_y$, is represented as a set of tuples $\langle a, i,j \rangle$, 
where the corresponding minimal absent word $x$ of $y$ is defined by
$x[0]=a$, $a \in \Sigma$, and $x[1 \dd m-1] = y[i \dd j]$, where $j-i+1=m \geq 2$.
It is known that if $|y|=n$ and $|\Sigma|=\sigma$, then $|\mina_y| \leq \sigma n$~\cite{Mignosi02}.

In~\cite{Chairungsee2012109}, Chairungsee and Crochemore introduced a measure of similarity between two words $x$ and $y$ based on the notion of minimal absent words. 
Let $\mina_x^\ell$ (resp.~$\mina_y^\ell$) denote the set of minimal absent words of length at most $\ell$ of $x$ (resp.~$y$).
The authors made use of a length-weighted index to provide a measure of the similarity between $x$ and $y$, using their sample sets $\mina_x^\ell$ and $\mina_y^\ell$, by considering the length of
each member in the symmetric difference $(\mina_x^\ell \bigtriangleup \mina_y^\ell)$ of the sample sets. For sample sets $\mina_x^\ell$ and $\mina_y^\ell$, they defined this index to be
$$\LW(\mina_x^\ell,\mina_y^\ell) = \sum_{w \in \mina_x^\ell \bigtriangleup \mina_y^\ell} \frac{1}{|w|^2}.$$
This work considers the following generalized version of the same problem.

\defproblem{\MAW}{a word $x$ of length $m$ and a word $y$ of length $n$}{$\LW(\mina_x,\mina_y)$, where $\mina_x$ and $\mina_y$ 
denote the sets of minimal absent words of $x$ and $y$, respectively.}

We also consider the aforementioned problem for two circular words. A circular word of length $m$ can be viewed as a traditional linear word which has the left- and right-most letters 
wrapped around and stuck together in some way. Under this notion, the same circular word can be seen as $m$ different 
linear words, which would all be considered equivalent. More formally, given a word $x$ of length $m$, we denote 
by $x^{\langle i \rangle}=x[i \dd m-1]x[0 \dd i-1]$, $0 \leq i < m$, the $i$-th \textit{rotation} of $x$, where $x^{\langle 0 \rangle}=x$. Given two words $x$ and $y$, we define $x\sim y$ if and only if there exist $i$, $0 \leq i < |x|$, such that $y=x^{\langle i \rangle}$. A \emph{circular word} $\tilde{x}$ is a conjugacy class of the equivalence relation $\sim$. Given a circular word $\tilde{x}$, any (linear) word $x$ in the equivalence class $\tilde{x}$ is called a \emph{linearization} of the circular word $\tilde{x}$. Conversely, given a linear word $x$, we say that $\tilde{x}$ is a \emph{circularization} of $x$ if and only if $x$ is a linearization of $\tilde{x}$.
The set $\F_{\tilde{x}}$ of factors of the circular word $\tilde{x}$ is equal to the set $\F_{xx}\cap \Sigma^{\leq |x|}$ of factors of  $xx$ whose length is at most $|x|$, where $x$ is any linearization of $\tilde{x}$. 

Note that if $x^{\langle i \rangle}$ and $x^{\langle j \rangle}$ are two rotations of the same word, then the factorial languages  $\F_{(x^{\langle i \rangle})^*}$ and $\F_{(x^{\langle j \rangle})^*}$ coincide, so one can unambiguously define the (infinite) language $\F_{\tilde{x}^*}$ as the language $\F_{x^*}$, where $x$ is any linearization of $\tilde{x}$.

In Section~\ref{sec:circ_seq_comp}, we give the definition of the set $\mina_{\tilde{x}}$ of minimal absent words for a circular word $\tilde{x}$. 
We will prove that the following problem can be solved with the same time and space complexity as its counterpart in the linear case.

\defproblem{\CMAW}
{a word $x$ of length $m$ and a word $y$ of length $n$}{$\LW(\mina_{\tilde{x}},\mina_{\tilde{y}})$, 
where $\mina_{\tilde{x}}$ and $\mina_{\tilde{y}}$ 
denote the sets of minimal absent words of the circularizations $\tilde{x}$ of $x$ and $\tilde{y}$ of $y$, respectively.}

%%%%%%%%%%%%%%%%%%%%%%%%%%%%%%%%%%%%%%%%%%%%%%%%%%%%%%%%%%%%%%%%%%%%%%%%%%%%%%%%%%%%%%%%%%%%%%%%%%%%%%
\section{Sequence Comparison}\label{sec:seq_comp}
%%%%%%%%%%%%%%%%%%%%%%%%%%%%%%%%%%%%%%%%%%%%%%%%%%%%%%%%%%%%%%%%%%%%%%%%%%%%%%%%%%%%%%%%%%%%%%%%%%%%%%

The goal of this section is to provide the first linear-time and linear-space algorithm for computing the similarity measure (see Section~\ref{sec:prem}) between two words defined over a fixed-sized alphabet. 
To this end, we consider two words $x$ and $y$ of lengths $m$ and $n$, respectively, and their associated sets of minimal absent words, $\mina_x$ and $\mina_y$, respectively. 
Next, we give a linear-time and linear-space solution for the {\MAW} problem. 
It is known from~\cite{Crochemore98automataand} and~\cite{MAW} that we can compute the sets $\mina_x$ and $\mina_y$ in linear time and space with respect to the two lengths $m$ and $n$, respectively.
The idea of our strategy consists of a merge sort on the sets $\mina_x$ and $\mina_y$, after they have been ordered with the help of suffix arrays.

To this end, we construct the suffix array associated to the word $w=xy$, together with the implicit $\LCP$ array corresponding to it. 
All of these structures can be constructed in time and space $\cO(m+n)$, as mentioned earlier. Furthermore, we can preprocess the array \textsf{LCP} for range minimum queries, which we denote by $\textsf{RMQ}_\textsf{LCP}$~\cite{Fischer11}.
With the preprocessing complete, the longest common prefix $\LCE$ of two suffixes of $w$ starting at positions $p$ and $q$ can be computed in
constant time~\cite{LCE}, using the formula
$\LCE(w,p,q)=\textsf{LCP}[\textsf{RMQ}_{\textsf{LCP}}(\textsf{iSA}[p]+1,\textsf{iSA}[q])].$

Using these data structures, it is straightforward to sort the tuples in the sets $\mina_x$ and $\mina_y$ lexicographically. That is, two tuples $x_1,x_2\in \mina_x$, are ordered according to the letter following their longest common prefix, or when it is not the case, with the one being the prefix, coming first. To do this, we simply go once through the suffix array associated to $w$ and assign to each tuple in $\mina_x$, respectively $\mina_y$, the rank of the suffix starting at the position indicated by its second component, in the suffix array. 
Since sorting an array of $n$ distinct integers, such that each is in $[0,n-1]$, can be done in time $\cO(n)$ (using bucket sort, for example), we can sort now each of the sets of minimal absent words, taking into consideration the letter on the first position and these ranks. Thus, from now on, we assume that $\mina_x=(x_0, x_1,\ldots, x_k)$ where $x_i$ is  lexicographically smaller than $x_{i+1}$, for $0\leq i <k\leq \sigma m$, and $\mina_y=(y_0, y_1,\ldots, y_\ell)$, where $y_j$ is  lexicographically smaller than $y_{j+1}$, for $0\leq j <\ell\leq \sigma n$.

Provided these tools, we now proceed to do the merge. Thus, considering that we are analysing the $(i+1)$th tuple in $\mina_x$ and the $(j+1)$th tuple in $\mina_y$, we note that the two are equal if and only if $x_i[0]=y_j[0]$ and 
$$\LCE(w,x_i[1], |x|+y_j[1])\geq \ell, \mbox{ where } \ell=x_i[2]-x_i[1]=y_j[2]-y_j[1].$$ 
In other words, the two minimal absent words are equal if and only if their first letters coincide, they have equal length $\ell+1$, and the longest common prefix of the suffixes of $w$ starting at the positions indicated by 
the second components of the tuples has length at least $\ell$.

Such a strategy will empower us with the means for constructing a new set $\mina_{xy}=\mina_x\cup\mina_y$. At each step, when analysing tuples $x_i$ and $y_j$ we proceed as following:
$$
\mina_{xy} = \left\{
                \begin{array}{l c l}
                  \mina_{xy}\cup \{x_i\}, & \mbox{and increment } i, &\qquad\mbox{if } x_i < y_j;\\
                  \mina_{xy}\cup \{y_j\}, & \mbox{and increment } j, &\qquad\mbox{if } x_i > y_j;\\
                  \mina_{xy}\cup \{x_i=y_j\}, &   \mbox{and increment both } i \mbox{ and } j, &\qquad \mbox{if } x_i = y_j.
                \end{array}
              \right.
$$
Observe that the last condition is saying that basically each common tuple is added only once to their union.

Furthermore, simultaneously with this construction we can also calculate the similarity between the words, given by $\LW(\mina_x,\mina_y)$, which is initially set to $0$. 
Thus, at each step, when comparing the tuples $x_i$ and $y_j$, we update 
$$
\LW(\mina_x,\mina_y) = \left\{
                \begin{array}{l c l}
                  \LW(\mina_x,\mina_y) + \frac{1}{|x_i|^2}, & \mbox{and increment } i, &\mbox{if } x_i < y_j;\\
                  \LW(\mina_x,\mina_y) + \frac{1}{|y_j|^2}, & \mbox{and increment } j, &\mbox{if } x_i > y_j;\\
                  \LW(\mina_x,\mina_y), &  \mbox{and increment both } i \mbox{ and } j, &\mbox{if } x_i = y_j.
                \end{array}
              \right.
$$
We impose the increment of both $i$ and $j$ in the case of equality as in this case we only look at the symmetric difference between the sets of minimal absent words.

As all these operations take constant time, once per each tuple in $\mina_x$ and $\mina_y$, it is easily concluded that the whole operation takes in the case of a fixed-sized alphabet time and space $\cO(m+n)$. 
Thus, we can compute the symmetric difference between the {\em complete} sets of minimal absent words, as opposed to~\cite{Chairungsee2012109}, of two words defined over a fixed-sized alphabet, in linear time and space with respect to the lengths of the two words. We obtain the following result.

\begin{theorem}
\label{the:maw}
Problem \MAW{} can be solved in time and space $\cO(m+n)$.
\end{theorem}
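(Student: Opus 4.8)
The plan is to assemble the pieces already laid out in the preceding text into a clean complexity accounting. The statement to be proved — that \MAW{} can be solved in time and space $\cO(m+n)$ — is really a summary of the construction just described, so the proof will be short and will proceed in the natural order of the algorithm's phases.

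\begin{proof}
Let $w=xy$, a word of length $m+n$. First, compute $\mina_x$ and $\mina_y$; by~\cite{Crochemore98automataand,MAW} each set is obtained in time and space linear in the length of the corresponding word, so this phase costs $\cO(m+n)$, and moreover $|\mina_x|\leq \sigma m$ and $|\mina_y|\leq \sigma n$, which are $\cO(m)$ and $\cO(n)$ since $\sigma=\cO(1)$. Next, build $\SA$, $\iSA$, and $\LCP$ for $w$; as recalled in Section~\ref{sec:prem} this takes time and space $\cO(m+n)$. Preprocess $\LCP$ for range minimum queries in time and space $\cO(m+n)$~\cite{Fischer11}, so that each $\LCE$ query on $w$ is answered in constant time via the displayed formula.

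Now sort the tuples of $\mina_x$ and of $\mina_y$ lexicographically. Traverse $\SA$ once and assign to every tuple $\langle a,i,j\rangle$ the value $\iSA[i]$ (respectively $\iSA[|x|+i]$ for tuples of $y$); this is a single pass over $\SA$ plus $\cO(m+n)$ work over the tuples. Then bucket-sort each set using the pair consisting of the first letter $a\in\Sigma$ (only $\sigma=\cO(1)$ values) and this rank (an integer in $[0,m+n-1]$); bucket sort on $\cO(m+n)$ items with keys in a range of size $\cO(m+n)$ runs in time and space $\cO(m+n)$. This yields the ordered sequences $\mina_x=(x_0,\dots,x_k)$ and $\mina_y=(y_0,\dots,y_\ell)$ with $k\le\sigma m$ and $\ell\le\sigma n$.

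Finally perform the merge. With indices $i,j$ initialized to $0$ and $\LW(\mina_x,\mina_y)$ initialized to $0$, at each step compare $x_i$ and $y_j$: equality holds iff $x_i[0]=y_j[0]$ and $\LCE(w,x_i[1],|x|+y_j[1])\ge x_i[2]-x_i[1]=y_j[2]-y_j[1]$, which is decided in constant time using one $\LCE$ query and a few arithmetic comparisons; update $\LW$ and advance $i$, $j$, or both according to the case distinction given above. Each comparison advances at least one of the two indices, so the loop performs $\cO(k+\ell)=\cO(m+n)$ iterations, each of constant cost. Summing over all phases, the total time and space is $\cO(m+n)$.
\qed
\end{proof}

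The only point that requires any care — and it is the one I would double-check rather than a genuine obstacle — is confirming that bucket sort is legitimately applicable here: the sort keys must be integers drawn from a range of size $\cO(m+n)$, which holds because the second component of every tuple is a position in $w$ and the first component ranges over a constant-size alphabet, so the composite key fits in $[0,\sigma(m+n))$ with $\sigma=\cO(1)$. Everything else is a routine tallying of the linear-time primitives ($\SA$/$\iSA$/$\LCP$ construction, RMQ preprocessing, constant-time $\LCE$) already cited in the excerpt.
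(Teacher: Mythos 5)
Your proposal is correct and follows essentially the same route as the paper: compute $\mina_x$ and $\mina_y$ in linear time, build $\SA$, $\iSA$, $\LCP$ and RMQ structures for $w=xy$, rank and bucket-sort the tuples, and merge with constant-time $\LCE$-based comparisons while accumulating $\LW(\mina_x,\mina_y)$. The complexity accounting, including the observation that the bucket-sort keys lie in a range of size $\cO(\sigma(m+n))$ with $\sigma=\cO(1)$, matches the paper's argument.
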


%%%%%%%%%%%%%%%%%%%%%%%%%%%%%%%%%%%%%%%%%%%%%%%%%%%%%%%%%%%%%%%%%%%%%%%%%%%%%%%%%%%%%%%%%%%%%%%%%%%%%%
\section{Circular Sequence Comparison}\label{sec:circ_seq_comp}
%%%%%%%%%%%%%%%%%%%%%%%%%%%%%%%%%%%%%%%%%%%%%%%%%%%%%%%%%%%%%%%%%%%%%%%%%%%%%%%%%%%%%%%%%%%%%%%%%%%%%%

Next, we discuss two possible definitions for the minimal absent words of a circular word, and highlight the differences between them. 

We start by recalling some basic facts about minimal absent words. For further details and references the reader is recommended~\cite{fici}. Every factorial language $L$ is uniquely determined by its (antifactorial) language of minimal absent words $\mina_L$, through the equation $L=\Sigma^*\setminus \Sigma^*\mina_L\Sigma^*$. The converse is also true, since by the definition of a minimal absent word we have $\mina_L=\Sigma L\cap L\Sigma \cap (\Sigma^*\setminus L)$. The previous equations define a bijection between factorial and antifactorial languages. Moreover, this bijection preserves regularity. In the case of a single (linear) word $x$, the set of minimal absent words for $x$ is indeed the antifactorial language $\mina_{\F_{x}}$. Furthermore, we can retrieve $x$ from its set of minimal absent words in linear time and space \cite{Crochemore98automataand}. 

Recall that given a circular word $\tilde{x}$, the set $\F_{\tilde{x}}$ of factors of $\tilde{x}$ is equal to the set $\F_{xx}\cap \Sigma^{\leq |x|}$ of factors of $xx$ 
whose lengths are at most $|x|$, where $x$ is any linearization of $\tilde{x}$. 
Since a circular word $\tilde{x}$ is a conjugacy class containing all the rotations of a linear word $x$, the language $\F_{\tilde{x}}$ can be seen as the factorial closure of the set $\{x^{\langle i\rangle}\mid i=0,\ldots,|x|-1\}$. This leads to the first definition of the set of minimal absent words for $\tilde{x}$, that is the set $\mina_{\F_{\tilde{x}}}=\{aub\mid a,b\in \Sigma, aub\notin \F_{\tilde{x}}, au,ub\in \F_{\tilde{x}}\}$. For instance, if $x=abaab$, we have $$\mina_{\F_{\tilde{x}}}=\{aaa, aabaa, aababa, abaaba, ababaa, baabab, babaab, babab, bb\}.$$ 

The advantage of this definition is that we can retrieve uniquely $\tilde{x}$ from $\mina_{\F_{\tilde{x}}}$. However, the total size of $\mina_{\F_{\tilde{x}}}$ (that is, the sum of the lengths of its elements) can be very large, as the following lemma suggests.

\begin{lemma}
Let $\tilde{x}$ be a circular word of length $m>0$. The set $\mina_{\F_{\tilde{x}}}$ contains precisely $\ell$ words of maximal length $m+1$, where $\ell$ is the number of distinct rotations of any linearization $x$ of $\tilde{x}$, that is, the cardinality of $\{x^{\langle i\rangle}\mid i=0,\ldots,|x|-1\}$. 
\end{lemma}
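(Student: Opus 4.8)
The plan is to characterize exactly which words in $\mina_{\F_{\tilde{x}}}$ have length $m+1$, and to show these are in bijection with the distinct rotations of $x$. Recall $\F_{\tilde{x}} = \F_{xx} \cap \Sigma^{\leq m}$, so every factor of $\tilde{x}$ has length at most $m$, and a word $w = aub$ with $|w| = m+1$ (hence $|au| = |ub| = m$) lies in $\mina_{\F_{\tilde{x}}}$ precisely when $au, ub \in \F_{\tilde{x}}$ but $aub \notin \F_{\tilde{x}}$. Since $|au| = m$, the condition $au \in \F_{\tilde{x}}$ just says $au$ is one of the rotations $x^{\langle i \rangle}$; likewise $ub$ must be a rotation of $x$. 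The key observation is that $aub \notin \F_{\tilde{x}}$ is automatic once $|aub| = m+1 > m$, so the only real constraints are that both length-$m$ factors $au$ and $ub$ are rotations of $x$.

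First I would establish the forward direction: given a rotation $r = x^{\langle i \rangle} = r[0]r[1]\cdots r[m-1]$, consider the word $w_i = r[0]r[1]\cdots r[m-1]\,c$ where $c = r[0]$ if $i=0$, and in general $c$ is chosen to be the letter $x[i-1 \bmod m]$ — that is, $c$ is the letter that cyclically follows position $m-1$ of the rotation, equivalently the first letter of the \emph{next} rotation's suffix. Concretely, set $au = x^{\langle i \rangle}$ and check that $ub = x^{\langle i \rangle}[1\dd m-1]\,c$ equals $x^{\langle i+1 \rangle}$ (indices mod $m$): both are rotations of $x$, $au$ has length $m$, $ub$ has length $m$, and $aub$ has length $m+1$ so it cannot be a factor of $\tilde{x}$. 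Hence $w_i \in \mina_{\F_{\tilde{x}}}$. This produces, for each value $i \in \{0,\dots,m-1\}$, a candidate word of length $m+1$, and I must check that $w_i = w_j$ iff $x^{\langle i \rangle} = x^{\langle j \rangle}$ as rotations: the map $w_i = x^{\langle i \rangle} \cdot (\text{one more letter})$ is injective on the set of \emph{distinct} rotations because $w_i$ has $x^{\langle i \rangle}$ as its length-$m$ prefix, and two such words agree exactly when their prefixes do. This gives exactly $\ell$ distinct words of length $m+1$.

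For the converse, I would take an arbitrary $w = aub \in \mina_{\F_{\tilde{x}}}$ with $|w| = m+1$ and show it arises this way: $au$ is a factor of $\tilde{x}$ of length $m$, hence $au = x^{\langle i \rangle}$ for some $i$; then $u = x^{\langle i \rangle}[1\dd m-1]$ and $ub$ is a length-$m$ factor of $\tilde{x}$ extending $u$ on the right, so $ub$ must be the unique rotation of $x$ beginning with $u$, namely $x^{\langle i+1 \rangle}$ — using that in $xx$, the occurrence of $u$ at the position corresponding to $x^{\langle i \rangle}$ is followed by $x[i-1 \bmod m]$, and that any length-$m$ window of $xx$ is a rotation. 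Thus $w = w_i$ in the notation above. Combining the two directions, the length-$(m+1)$ members of $\mina_{\F_{\tilde{x}}}$ are exactly $\{w_i \mid 0 \leq i < m\}$, whose cardinality is $\ell$, and I would separately note that $m+1$ is indeed the maximal possible length since every element of $\F_{\tilde{x}}$ has length $\leq m$, so a minimal absent word $aub$ has $|aub| \leq m+1$.

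The main obstacle I anticipate is the bookkeeping around the ``one more letter'' and the uniqueness of the right-extension: one must argue carefully that a length-$(m-1)$ word $u$ that occurs as a factor of $\tilde{x}$ extends uniquely to a rotation on the \emph{right} (and that this is compatible with it extending uniquely on the \emph{left} to give $au$), which is really the statement that a length-$(m-1)$ factor of $xx$ occurring inside a single copy-boundary window has a forced neighboring letter. A subtle point is that when $x$ is not primitive (so $\ell < m$), the same rotation $x^{\langle i \rangle}$ is reached by several values of $i$, but the letter following position $m-1$ in that rotation is still well-defined as a function of the rotation itself — because $x^{\langle i \rangle} = x^{\langle j \rangle}$ as words forces $x[i-1 \bmod m] = x[j-1 \bmod m]$ — so the construction descends to the quotient and the count is exactly $\ell$, not $m$. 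Handling this periodicity case cleanly is where the argument needs the most care.
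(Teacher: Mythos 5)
Your proof is correct and takes essentially the same route as the paper's: append to each rotation its first (cyclically following) letter to obtain a length-$(m+1)$ minimal absent word, and conversely note that the length-$m$ prefix and suffix of any such word must be consecutive rotations, yielding a bijection with the distinct rotations; your explicit handling of injectivity and of the non-primitive case is extra detail the paper leaves implicit. The only blemish is an index slip: the appended letter should be $x[i \bmod m]=r[0]$ (the first letter of the rotation), not $x[(i-1)\bmod m]$, as forced by your own check that $ub=x^{\langle i+1\rangle}$.
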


\begin{proof}
Let $x=x[0]x[1] \dd x[m-1]$ be a linearization of $\tilde{x}$. The word obtained by appending to $x$ its first letter, $x[0]x[1] \dd x[m-1]x[0]$, belongs to $\mina_{\F_{\tilde{x}}}$, since it has length $m+1$, hence it cannot belong to $\F_{\tilde{x}}$, but its maximal proper prefix $x=x^{\langle 0\rangle}$ and its maximal proper suffix $x^{\langle 1\rangle}=x[1] \dd x[m-1]x[0]$ belong to $\F_{\tilde{x}}$. 

The same argument shows that for any rotation $x^{\langle i\rangle}=x[i]x[i+1] \dd x[m-1]x[0]\dd x[i-1]$ of $x$, the word $x[i]x[i+1] \dd x[m-1]x[0]\dd x[i-1]x[i]$, obtained by 
appending to $x^{\langle i\rangle}$ its first letter, belongs to $\mina_{\F_{\tilde{x}}}$.

Conversely, if a word of maximal length $m+1$ is in  $\mina_{\F_{\tilde{x}}}$, then its maximal proper prefix and its maximal proper suffix are words of length $m$ in $\F_{\tilde{x}}$, so they must be consecutive rotations of $x$.

Therefore, the number of words of maximal length $m+1$ in $\mina_{\F_{\tilde{x}}}$ equals the number of distinct rotations of $x$, hence the statement follows.
\qed
\end{proof}

This is in sharp contrast with the situation for linear words, where the set of minimal absent words can be represented on a trie having size linear in the length of the word. Indeed, the algorithm \textsc{MF-trie}, introduced in~\cite{Crochemore98automataand}, builds the tree-like deterministic automaton accepting the set of minimal absent words for a word $x$ taking as input the factor automaton of $x$, that is the minimal deterministic automaton recognizing the set of factors of $x$. The leaves of the trie correspond to the minimal absent words for $x$, while the internal states are those of the factor automaton. Since the factor automaton of a word $x$ has less than $2|x|$ states (for details, see~\cite{CHL07}), this provides a representation of the minimal absent words of a word of length $n$ in space $O(\sigma n)$.

This algorithmic drawback leads us to the second definition. This second definition of minimal absent words for circular strings has been already introduced in~\cite{DBLP:conf/isit/OtaM13,6979851}.
First, we give a combinatorial result which shows that when considering circular words it does not make sense to look at absent words obtained from more than two rotations.

\begin{lemma}\label{lem:general}
For any positive integer $k$ and any word $u$, the set $V=\{v \mid k|u|+1< |v| \leq (k+1)|u|\} \cap (\mina_{u^{k+1}}\setminus\mina_{u^k})$  is empty.
\end{lemma}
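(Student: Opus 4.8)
The plan is to show that any word $v$ with $k|u|+1 < |v| \le (k+1)|u|$ that is a minimal absent word of $u^{k+1}$ must already be a minimal absent word of $u^{k}$, which immediately forces the set $V$ to be empty. So fix such a $v$ and assume $v \in \mina_{u^{k+1}}$; I want to prove $v \in \mina_{u^{k}}$, i.e.\ (i) $v$ does not occur in $u^{k}$ and (ii) every proper factor of $v$ occurs in $u^{k}$.

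First I would handle (ii). Since $v \in \mina_{u^{k+1}}$, every proper factor of $v$ occurs in $u^{k+1}$. Write $v = a w b$ with $a,b \in \Sigma$, so that $aw$ and $wb$ — the maximal proper prefix and suffix of $v$ — both occur in $u^{k+1}$, and $|aw| = |wb| = |v|-1 \ge k|u|+1$. The key point is a periodicity observation: any factor of $u^{k+1}$ of length at least $k|u|+1$ actually occurs inside $u^{k}$ as well. Indeed, if a factor $z$ of $u^{k+1}$ has $|z| \ge |u|+1$, then $z$ occurs in $u u$ (any occurrence of $z$ in the periodic word $u^{k+1}$ can be shifted by a multiple of $|u|$ to start within the first copy of $u$, and since $|z| \le |v|-0 \le (k+1)|u|$ it stays within $u^{k+1}$; more carefully, $z$ has period $|u|$ because it sits inside a word of period $|u|$ and is long enough, so $z$ occurs in $u^{\lceil |z|/|u|\rceil +1}$, and in particular in $u^{k}$ whenever $\lceil |z|/|u|\rceil + 1 \le k$, which holds since $|z| \le (k+1)|u|-1 < (k+1)|u|$... ). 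The clean way: a word of length $> k|u|$ sitting in $u^{k+1}$ has $u$ as a period, hence is a factor of $u^{k+1}$ and, by the periodicity, also a factor of $u^{k}$ as long as it is not longer than $k|u|$ — but $|aw|, |wb| \le |v|-1 \le (k+1)|u|-1$, which may exceed $k|u|$, so this needs a touch more care. I would argue instead that $aw$ (resp.\ $wb$), having period $|u|$ and length $\le (k+1)|u|-1$, is determined by its first $|u|$ letters and therefore occurs in $u^{k}$ precisely when its length is $\le k|u|$; since $|aw| = |v|-1$ and $|v| \le (k+1)|u|$, we get $|aw| \le (k+1)|u|-1$, and the subtle boundary case $|v| = (k+1)|u|$ giving $|aw| = (k+1)|u|-1 > k|u|$ has to be excluded — this is exactly where the hypothesis $|v| > k|u|+1$ is used together with the fact that $aw$ and $wb$ both being factors of $u^{k+1}$ with period $|u|$ forces them to be the relevant prefixes/suffixes of $u^{k+1}$ of that length, which are themselves factors of $u^{k+1}$, and then I push them down to $u^k$ using that all proper factors of $v$ of length $\le k|u|$ trivially occur in $u^k$ while the two of length in $(k|u|, (k+1)|u|)$ are prefixes/suffixes of powers of $u$ hence occur in $u^{k}$ by the period argument.

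Then (i): I must show $v$ does not occur in $u^{k}$. This is the easy direction — if $v$ occurred in $u^{k}$, it would occur in $u^{k+1}$ (as $u^k$ is a prefix of $u^{k+1}$), contradicting $v \in \mina_{u^{k+1}}$. Combining (i) and (ii) gives $v \in \mina_{u^{k}}$, so $v \notin \mina_{u^{k+1}} \setminus \mina_{u^{k}}$, proving $V = \emptyset$.

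The main obstacle is the boundary bookkeeping in step (ii): carefully converting "$aw$ and $wb$ occur in $u^{k+1}$ and have length $> k|u|$" into "they occur in $u^{k}$", because naively a factor of $u^{k+1}$ of length $(k+1)|u|-1$ need not be a factor of $u^{k}$. The resolution is that such a long factor, sitting inside a word of period $|u|$, is itself $|u|$-periodic, hence equals a prefix of $u^{\infty}$ of that length, and a prefix of $u^{\infty}$ of length $L$ is a factor of $u^{k}$ iff $L \le k|u|$; the strict inequality $k|u|+1 < |v|$ is what guarantees $|aw|=|wb|=|v|-1 > k|u|$ is not the obstruction — rather, one shows both halves are genuinely forced to be the canonical $|u|$-periodic words and then the only way $v$ itself avoids $u^{k+1}$ while both halves are $|u|$-periodic prefixes/suffixes is the "phase mismatch" at the splice point $awb$, which is unaffected by dropping one copy of $u$. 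I would write this mismatch argument explicitly: $aw$ periodic with period $|u|$ and $wb$ periodic with period $|u|$, but $awb = v$ not periodic with period $|u|$ (else $v$ would occur in $u^{k+1}$), and all of this is a purely local statement about the single splice position, independent of $k$, which is the conceptual heart of the lemma.
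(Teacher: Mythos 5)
Your overall strategy cannot work, and the obstacle is not the ``boundary bookkeeping'' you flag in step (ii) but the very target of that step. For any $v$ with $|v|>k|u|+1$, the maximal proper prefix $aw$ has length $|v|-1>k|u|=|u^k|$, so it cannot occur in $u^k$ under any circumstances; consequently no word in the stated length window can ever belong to $\mina_{u^k}$, and your intended conclusion ``$v\in\mina_{u^k}$'' is false for every such $v$. In this window the set difference is a red herring ($\mina_{u^k}$ contributes nothing), so proving $V=\emptyset$ amounts to proving that $\mina_{u^{k+1}}$ itself contains no word of these lengths. That is what the paper does: it assumes such a $v$ of length $m$ exists, takes the two occurrences in $u^{k+1}$ of the middle part $v[1\dd m-2]$ supplied by the maximal proper prefix and suffix, combines the period $|u|$ with the period coming from these two overlapping occurrences via the periodicity (Fine and Wilf) lemma, and reaches a contradiction after forcing the occurrences to be flush with the two ends of $u^{k+1}$. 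Nothing in your proposal performs an argument of this kind.

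Your fallback ``phase mismatch'' sketch does not substitute for it. It is true that $aw$ and $wb$, being factors of $u^{k+1}$, have period $|u|$, and one can even deduce that $v=awb$ has period $|u|$; but, contrary to your parenthetical, period $|u|$ does not imply that $v$ occurs in $u^{k+1}$: $v$ is a prefix of $w'^{\infty}$ for some rotation $w'$ of $u$ (not of $u^{\infty}$ in general), and when the occurrences of $aw$ and $wb$ sit flush against the right and left ends of $u^{k+1}$ the word $v$ can still be absent. Concretely, take $u=ab$ and $k=1$: for $v=baba$ we have $|v|=4=(k+1)|u|$, both halves $bab$ and $aba$ occur in $u^2=abab$, and $v$ has period $2$, yet $v$ does not occur in $abab$ --- indeed $baba\in\mina_{abab}$ while $baba\notin\mina_{ab}$. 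This is precisely the configuration your splice argument declares impossible, so the analysis is emphatically not ``independent of $k$'' or of position: the entire difficulty lives at the occurrences touching the ends of $u^{k+1}$, which is exactly where the paper's proof concentrates (its case $i=-1$ and the symmetric case at the right end), and which your sketch never confronts. The same example shows that the upper endpoint $|v|=(k+1)|u|$ of the stated window is genuinely delicate, so an argument that waves the boundary away cannot be repaired without addressing it head-on.
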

\begin{proof}
This obviously holds for all words $u$ of length 1.
Assume towards a contradiction that this is not the case in general. Hence, there must exist a word $v$ of length $m$ that fulfills the conditions in the lemma, thus $v\in V$ and $m>2$. Furthermore, since the length $m-1$ prefix and the length $m-1$ suffix of every minimal absent word occur in the main word at non-consecutive positions, there must exist positions $i<j\leq n=|u|$ such that 
\begin{eqnarray}\label{eq:1}
v[1\dd m-2]=u^{k+1}[i+1\dd i+m-2]=u^{k+1}[j+1\dd j+m-2].
\end{eqnarray}
Obviously, following Equation~(\ref{eq:1}), since $m-2\geq kn$, we have that $v[1 \dd m-2]$ is $(j-i)$-periodic. But, we know that $v[1\dd m-2]$ is also $n$-periodic. 
Thus, following a direct application of the periodicity lemma we have that $v[1\dd m-2]$ is $p=\gcd(j-i,n)$-periodic. 
But, in this case we have that $u$ is $p$-periodic, and, therefore, $u[i]=u[j]$, which leads to a contradiction with the fact that $v$ is a minimal absent word, whenever $i$ is defined. Thus, it must be the case that $i=-1$. Using the same strategy and looking at positions $u[i+m-2]$ and $u[j+m-2]$, we conclude that $j+m-2=(k+1)n$. Therefore, in this case, we have that $m=kn+1$, which is a contradiction with the fact that the word $v$ fulfills the conditions of the lemma.
This concludes the proof.
\qed\end{proof}

Observe now that the set $V$ consists in fact of all extra minimal absent words generated whenever we look at more than one rotation, that do not include the length arguments. That is, $V$ does not include the words bounding the maximum length that a word is allowed, nor the words created, or lost, during a further concatenation of an image of $u$. However, when considering an iterative concatenation of the word, these extra elements determined by the length constrain cancel each other.

As observed in Section~\ref{sec:prem}, two rotations of the same word $x$ generate two languages that have the same set of factors. So, we can unambiguously associate to a circular word $\tilde{x}$ the (infinite) factorial language $\F_{\tilde{x}^*}$.
It is therefore natural to define the set of minimal absent words for the circular word $\tilde{x}$ as the set $\mina_{\F_{\tilde{x}^*}}$. For instance, if $\tilde{x}=abaab$, then we have 
$$\mina_{\F_{\tilde{x}^*}}=\{aaa, aabaa, babab, bb\}.$$ 

This second definition is much more efficient in terms of space, as we show below. In particular, the length of the words in $\mina_{\F_{\tilde{x}^*}}$ is bounded from above by $|x|$, 
hence  $\mina_{\F_{\tilde{x}^*}}$ is a finite set.

Recall that a word $x$ is \emph{a power} of a word $y$ if there exists a positive integer $k>1$ such that $x$ is expressed as $k$ consecutive concatenations of $y$, denoted by $x=y^k$. 
Conversely, a word $x$ is {\em primitive} if $x=y^k$ implies $k=1$. Notice that a word is primitive if and only if any of its rotation is. We can therefore extend the definition of primitivity to circular words. The definition of $\mina_{\F_{\tilde{x}^*}}$ does not allow one to uniquely reconstruct $\tilde{x}$ from $\mina_{\F_{\tilde{x}^*}}$, unless $\tilde{x}$ is known to be primitive, since it is readily verified that $\F_{\tilde{x}^*}=\F_{\widetilde{xx}^*}$ and therefore also the minimal absent words of these two languages coincide. However, from the algorithmic point of view, 
this issue can be easily managed by storing the length $|x|$ of a linearization $x$ of $\tilde{x}$ together with the set $\mina_{\F_{\tilde{x}^*}}$.
Moreover, in most practical cases, for example when dealing with biological sequences, it is highly unlikely that the circular word considered is not primitive.

The difference between the two definitions above is presented in the next lemma.

\begin{lemma}\label{lem:twodef}
 $\mina_{\F_{\tilde{x}^*}}=\mina_{\F_{\tilde{x}}}\cap \Sigma^{\leq |x|}.$
\end{lemma}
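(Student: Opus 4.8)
The plan is to prove the two inclusions separately, using the combinatorial content of Lemma~\ref{lem:general} for the nontrivial direction.

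\textbf{The easy inclusion ($\supseteq$).} Suppose $aub \in \mina_{\F_{\tilde{x}}}$ with $|aub| \leq |x|$. By definition $aub \notin \F_{\tilde{x}}$ and $au, ub \in \F_{\tilde{x}}$. Since $\F_{\tilde{x}} \subseteq \F_{\tilde{x}^*}$ (every factor of a single rotation is a factor of a power), we immediately get $au, ub \in \F_{\tilde{x}^*}$. It remains to check $aub \notin \F_{\tilde{x}^*}$: this is where the length bound $|aub| \leq |x|$ is used, since $\F_{\tilde{x}^*} \cap \Sigma^{\leq |x|} = \F_{\tilde{x}}$ by the characterization of $\F_{\tilde x}$ recalled in Section~\ref{sec:prem} (factors of $xx$ of length at most $|x|$ are exactly the factors of any rotation's power up to that length). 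Hence $aub \in \mina_{\F_{\tilde{x}^*}}$, and trivially $|aub| \leq |x|$.

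\textbf{The hard inclusion ($\subseteq$).} Let $aub \in \mina_{\F_{\tilde{x}^*}}$; I must show $aub \in \mina_{\F_{\tilde{x}}}$ and $|aub| \leq |x|$. For the length bound: $aub$ is absent from $\F_{\tilde{x}^*}$ but $au$ occurs; writing $m = |x|$, if $|au| > m$ then $au$ spans more than a full period of $x^\infty$, so $u$ already determines the cyclic structure and one can extend $au$ by the forced next letter, which would have to be $b$ (contradiction) — more cleanly, $\F_{\tilde x^*}$ is closed under the operation "if $w$ occurs and $|w|\geq m$ then $w\cdot c$ occurs for the unique $c$ forced by periodicity", which forces $|aub| \leq m$. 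Then, knowing $|aub| \leq m$, I use $\F_{\tilde{x}^*} \cap \Sigma^{\leq m} = \F_{\tilde{x}}$ again: $au, ub \in \F_{\tilde{x}^*}$ have length $\leq m$ so they lie in $\F_{\tilde{x}}$, and $aub \notin \F_{\tilde{x}^*} \supseteq \F_{\tilde{x}}$ so $aub \notin \F_{\tilde{x}}$; thus $aub \in \mina_{\F_{\tilde{x}}}$.

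\textbf{Where the real work lies.} The subtle point is the length bound $|aub| \leq |x|$ for elements of $\mina_{\F_{\tilde x^*}}$, i.e.\ that the second definition is finite with the claimed bound. This is exactly the phenomenon isolated by Lemma~\ref{lem:general}: taking $u$ there to be a linearization of $\tilde x$ (so $\F_{\tilde x^*} = \F_{u^*} = \bigcup_k \F_{u^k}$), the lemma says that for every $k$ no \emph{new} minimal absent word of length strictly between $k|u|+1$ and $(k+1)|u|$ appears when passing from $u^k$ to $u^{k+1}$; combined with the observation (noted in the paragraph after Lemma~\ref{lem:general}) that the only length-$(k|u|+1)$-type words that do appear are boundary words that cancel out in the limit, one concludes that $\mina_{\F_{u^*}}$ contains no word of length exceeding $|u|$. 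I would therefore first establish, via Lemma~\ref{lem:general}, the identity $\mina_{\F_{\tilde x^*}} \subseteq \Sigma^{\leq |x|}$, and then the lemma's statement reduces to the clean set-theoretic manipulations above using $\F_{\tilde x^*} \cap \Sigma^{\leq|x|} = \F_{\tilde x}$. The main obstacle is making the "boundary words cancel" argument precise — one must check that a word of length exactly $k|u|+1$ that is minimal absent for $u^k$ either fails to be absent for $u^{k+1}$ or fails the proper-factors condition there, and iterate; a clean way is to argue directly that any $v \in \mina_{\F_{u^*}}$ with $|v| \geq |u|+1$ would, by periodicity of its long proper prefix and suffix (as in the proof of Lemma~\ref{lem:general}), force a letter equality contradicting minimality, with the sole escape being the degenerate case handled by the length constraint.
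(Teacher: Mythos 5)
Your overall decomposition is sound and in fact more explicit than the paper's own proof, which consists solely of the observation $\F_{\tilde{x}^*}\cap\Sigma^{\leq |x|}=\F_{\tilde{x}}$ plus ``the statement then follows from the definition of minimal absent words''; the length bound on $\mina_{\F_{\tilde{x}^*}}$ is asserted in the text before the lemma and ultimately rests on the periodicity argument behind Lemma~\ref{lem:general}. Your easy inclusion and the set-theoretic half of your hard inclusion reproduce exactly that step correctly, and you are right that the whole combinatorial weight sits in showing $\mina_{\F_{\tilde{x}^*}}\subseteq\Sigma^{\leq |x|}$ (a general factorial language $L$ with $L\cap\Sigma^{\leq m}=L'$ need not satisfy $\mina_L=\mina_{L'}\cap\Sigma^{\leq m}$, so this really must be proved).

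The gap is the boundary length $|aub|=|x|+1$, which neither of your two cited mechanisms covers. Write $m=|x|$. Your forced-extension property does settle every length at least $m+2$: there $|u|\geq m$, so the unique right extension of $u$ in the $m$-periodic word is the letter $u[|u|-m]$, which must be $b$ since $ub$ occurs, and the same letter is the forced extension of $au$, whence $aub$ occurs, a contradiction. But at length exactly $m+1$ the property is vacuous: the forced right extension of $au$ (length $m$) is the letter $a$ itself and the forced left extension of $ub$ is $b$, so one only learns $a\neq b$ and no contradiction follows. Nor can this case be outsourced to Lemma~\ref{lem:general} as a black box: its window $k|u|+1<|v|\leq(k+1)|u|$ excludes precisely the boundary lengths $k|u|+1$, and $m+1$ is such a length (take $u=x$, $k=1$); moreover your closing clause ``the sole escape being the degenerate case handled by the length constraint'' does not apply here, since for the infinite language $\F_{\tilde{x}^*}$ there is no length constraint to invoke --- the case must be killed outright. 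What does work is redoing the Fine--Wilf step from the proof of Lemma~\ref{lem:general}: if $|aub|=m+1$ then both $ua$ and $ub$ are length-$m$ factors of the $m$-periodic word, so $u$ (of length $m-1$) occurs at two positions followed by different letters; these positions are non-congruent modulo $m$, the overlap of the two occurrences together with the period $m$ yields, by the periodicity lemma, a period $p<m$ dividing both $m$ and the distance between the occurrences, and $p$-periodicity forces the two successor letters to coincide, i.e.\ $a=b$, contradicting $aub\notin\F_{\tilde{x}^*}$. Adding this argument (and, if you route through Lemma~\ref{lem:general} for the remaining lengths, a short stabilization argument showing membership in $\mina_{x^k}$ is eventually constant in $k$) completes your proof.
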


\begin{proof}
 Clearly, $\F_{\tilde{x}^*}\cap \Sigma^{\leq |x|}=\F_{\tilde{x}}$. 
 The statement then follows from the definition of minimal absent words.\qed
\end{proof}

Based on the previous discussion, we set $\mina_{\tilde{x}}=\mina_{\F_{\tilde{x}^*}}$, while the following corollary comes straightforwardly as a consequence of Lemma~\ref{lem:general}.

\begin{corollary}
  \label{lem:circ}
  Let $\tilde{x}$ be a circular word. Then $\mina_{\tilde{x}}=\mina_{xx}^{|x|}$. 
\end{corollary}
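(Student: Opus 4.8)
The goal is to show $\mina_{\tilde{x}}=\mina_{xx}^{|x|}$, that is, that the set of minimal absent words of the circular word $\tilde{x}$ (under the second definition, $\mina_{\F_{\tilde{x}^*}}$) coincides with the set of minimal absent words of the linear word $xx$ that have length at most $|x|$. The plan is to unwind the definitions and combine two facts already available: Lemma~\ref{lem:twodef}, which rewrites $\mina_{\tilde{x}}=\mina_{\F_{\tilde{x}^*}}$ as $\mina_{\F_{\tilde{x}}}\cap\Sigma^{\le|x|}$, and Lemma~\ref{lem:general} with the choice $u=x$, $k=1$, which controls the difference between $\mina_{xx}$ and $\mina_x$ in the relevant length range. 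So the real content is to identify $\mina_{\F_{\tilde{x}}}\cap\Sigma^{\le|x|}$ with $\mina_{xx}^{|x|}$, i.e.\ with $\mina_{\F_{xx}}\cap\Sigma^{\le|x|}$ (recall $\mina_z^{k}$ denotes minimal absent words of the linear word $z$ of length at most $k$).

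First I would note that since $\F_{\tilde{x}}=\F_{xx}\cap\Sigma^{\le|x|}$ (stated in Section~\ref{sec:prem}), and minimal absentness $aub$ only tests membership of $aub$, $au$, $ub$ in the factorial language, for words $aub$ of length at most $|x|$ the proper factors $au,ub$ have length at most $|x|-1<|x|$, so they lie in $\F_{\tilde{x}}$ iff they lie in $\F_{xx}$. Hence $\mina_{\F_{\tilde{x}}}\cap\Sigma^{\le|x|}=\mina_{\F_{xx}}\cap\Sigma^{\le|x|}=\mina_{xx}^{|x|}$. Combined with Lemma~\ref{lem:twodef} this already gives $\mina_{\tilde{x}}=\mina_{xx}^{|x|}$ — but I should double-check the boundary length $|x|+1$ is handled: a word of length exactly $|x|+1$ is excluded from $\Sigma^{\le|x|}$ on the left, and on the right $\mina_{xx}^{|x|}$ also excludes it, so the two sides agree trivially there. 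The only subtlety is whether restricting to length $\le|x|$ on the $xx$ side is faithful: this is where Lemma~\ref{lem:general} (with $u=x$, $k=1$) enters, guaranteeing that no minimal absent word of $xx$ in the range $|x|+1<|v|\le 2|x|$ fails to already be a minimal absent word of $x$ in a way that would matter — i.e.\ that the passage from $x$ to $xx$ does not create genuinely new short minimal absent words beyond those forced by the wrap-around, which are exactly the ones of length $>|x|$ we are discarding.

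Concretely, the steps in order are: (1) invoke Lemma~\ref{lem:twodef} to write $\mina_{\tilde{x}}=\mina_{\F_{\tilde{x}}}\cap\Sigma^{\le|x|}$; (2) use $\F_{\tilde{x}}=\F_{xx}\cap\Sigma^{\le|x|}$ together with the observation on factor lengths above to replace $\mina_{\F_{\tilde{x}}}\cap\Sigma^{\le|x|}$ by $\mina_{\F_{xx}}\cap\Sigma^{\le|x|}$; (3) recognize $\mina_{\F_{xx}}\cap\Sigma^{\le|x|}$ as the notation $\mina_{xx}^{|x|}$. I expect the main obstacle to be purely notational bookkeeping — making sure that ``$\mina$ of a linear word'' ($\mina_{xx}$, as a set of tuples) and ``$\mina$ of a factorial language'' ($\mina_{\F_{xx}}$) are being identified correctly, since the paper uses $\mina_y$ for both $\mina_{\F_y}$ and the tuple representation; and confirming that Lemma~\ref{lem:general} is genuinely needed (for the inclusion $\mina_{xx}^{|x|}\subseteq\mina_x^{|x|}$-type statements) rather than being implicit. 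If one wants $\mina_{xx}^{|x|}$ to mean ``minimal absent words of $x$ up to length $|x|$ obtained via the circular structure'', the cleanest route is to apply Lemma~\ref{lem:general} directly: for lengths in $(|x|+1, 2|x|]$ the set $\mina_{xx}\setminus\mina_x$ contributes nothing new below length $|x|$, so truncating $\mina_{xx}$ at length $|x|$ is the same as taking the circular minimal absent words, which is precisely $\mina_{\tilde{x}}$.
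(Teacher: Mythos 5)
Your core chain (1)--(3) is correct and complete: Lemma~\ref{lem:twodef} gives $\mina_{\tilde{x}}=\mina_{\F_{\tilde{x}}}\cap\Sigma^{\leq|x|}$, and since $\F_{\tilde{x}}=\F_{xx}\cap\Sigma^{\leq|x|}$ and the defining conditions for a minimal absent word $aub$ with $|aub|\leq|x|$ only query membership of words of length at most $|x|$ (namely $aub$, $au$, $ub$), you may replace $\F_{\tilde{x}}$ by $\F_{xx}$ there, which yields $\mina_{\F_{\tilde{x}}}\cap\Sigma^{\leq|x|}=\mina_{\F_{xx}}\cap\Sigma^{\leq|x|}=\mina_{xx}^{|x|}$. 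This is, however, a genuinely different route from the paper's: the paper presents the corollary as a consequence of Lemma~\ref{lem:general} (together with the surrounding discussion relating $\mina_{\F_{\tilde{x}^*}}$ to the minimal absent words of powers of $x$), whereas your argument never uses Lemma~\ref{lem:general} at all and instead leans entirely on Lemma~\ref{lem:twodef}. The trade-off is that in your route all the non-trivial content --- in particular the bound $|x|$ on the lengths of words in $\mina_{\F_{\tilde{x}^*}}$ --- is silently absorbed into Lemma~\ref{lem:twodef}, which you are entitled to cite but whose own proof in the paper is terse; the paper's route via Lemma~\ref{lem:general} is what connects the circular set to truncations of linear sets of minimal absent words. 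One caveat: your remarks about where Lemma~\ref{lem:general} ``enters'' are off the mark --- with $u=x$, $k=1$ it speaks only about words of length strictly greater than $|x|+1$, which are discarded by the truncation at length $|x|$ anyway, so it cannot be what justifies the identification of $\mina_{xx}^{|x|}$ with the circular minimal absent words, and your closing sentence drawing that conclusion from it is a non sequitur; fortunately steps (1)--(3) stand on their own, so this confusion does not affect the validity of your proof.
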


Corollary~\ref{lem:circ} was first introduced as a definition for the set of minimal absent words of a circular word in~\cite{6979851}. 
Using the result of Corollary~\ref{lem:circ}, we can easily extend the algorithm described in the previous section to the case of circular words. 
That is, given two circular words $\tilde{x}$ of length $m$ and  $\tilde{y}$ of length $n$, we can compute in time and space $\cO(m+n)$ 
the quantity $\LW(\mina_{\tilde{x}},\mina_{\tilde{y}})$. We obtain the following result.

\begin{theorem}
\label{the:cmaw}
Problem \CMAW{} can be solved in time and space $\cO(m+n)$.
\end{theorem}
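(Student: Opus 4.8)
The plan is to reduce the circular problem to the linear one already solved in Theorem~\ref{the:maw}. By Corollary~\ref{lem:circ}, for a circular word $\tilde{x}$ with linearization $x$ of length $m$ we have $\mina_{\tilde{x}}=\mina_{xx}^{|x|}$, i.e.\ the minimal absent words of $\tilde{x}$ are exactly the minimal absent words of the doubled linear word $xx$ that have length at most $m$. So the first step is: given $x$ of length $m$ and $y$ of length $n$, form the linear words $xx$ and $yy$, of lengths $2m$ and $2n$ respectively, and compute $\mina_{xx}$ and $\mina_{yy}$ in time and space $\cO(m+n)$ using the linear-time algorithms of~\cite{Crochemore98automataand,MAW}.

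The second step is to filter each of these sets by length: discard every tuple $\langle a,i,j\rangle$ with $j-i+2>m$ (respectively $>n$), retaining only those encoding a minimal absent word of length at most $m$ (resp.\ $n$). Each tuple carries its length explicitly as $j-i+2$, so this pass costs $\cO(1)$ per tuple and $\cO(m+n)$ overall; by Corollary~\ref{lem:circ} the surviving sets are precisely $\mina_{\tilde{x}}$ and $\mina_{\tilde{y}}$. Note that $|\mina_{xx}|\le\sigma\cdot 2m=\cO(m)$ and likewise for $yy$, so the sizes stay linear.

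The third step is to run the merge-and-accumulate procedure of Section~\ref{sec:seq_comp} verbatim on these two filtered sets, with the single cosmetic change that the concatenated word used for the suffix array, \LCP{} array, and \LCE{} queries is now $w=xxyy$ (of length $2m+2n$) rather than $w=xy$; the second component of a tuple from $\mina_{\tilde{y}}$ is offset by $2m$ instead of $m$. The tuple representation, the lexicographic sorting via bucket sort on suffix ranks, the equality test via a single $\LCE$ query, and the update rule for $\LW$ are all unchanged, and each still costs $\cO(1)$ per tuple. Hence the total time and space is $\cO(m+n)$, and the computed value is exactly $\LW(\mina_{\tilde{x}},\mina_{\tilde{y}})$, which settles the theorem.

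The only real content beyond bookkeeping is the appeal to Corollary~\ref{lem:circ}, which is where Lemma~\ref{lem:general} does the work: one must be sure that doubling the word (rather than tripling, quadrupling, etc.) already captures every minimal absent word of $\tilde{x}$ of length up to $m$, so that no relevant tuple is missed by working with $xx$. Since that lemma and corollary are established earlier in the excerpt, the remaining argument is a routine adaptation of Theorem~\ref{the:maw}; I do not anticipate any genuine obstacle, only the need to state the offsets and the length filter carefully.
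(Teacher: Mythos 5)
Your proposal is correct and follows essentially the same route as the paper: the paper's proof of Theorem~\ref{the:cmaw} is exactly the reduction via Corollary~\ref{lem:circ}, computing the minimal absent words of the doubled words $xx$ and $yy$, keeping only those of length at most $m$ (resp.\ $n$), and running the linear-case merge of Section~\ref{sec:seq_comp} on the result. Your version merely spells out the bookkeeping (offsets in $w=xxyy$, the length filter per tuple) that the paper leaves implicit.
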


%%%%%%%%%%%%%%%%%%%%%%%%%%%%%%%%%%%%%%%%%%%%%%%%%%%%%%%%%%%%%%%%%%%%%%%%%%%%%%%%%%%%%%%%%%%%%%%%%%%%%%
\section{Implementation and Applications}\label{sec:imp_app}
%%%%%%%%%%%%%%%%%%%%%%%%%%%%%%%%%%%%%%%%%%%%%%%%%%%%%%%%%%%%%%%%%%%%%%%%%%%%%%%%%%%%%%%%%%%%%%%%%%%%%%

We implemented the presented algorithms as programme \textsf{scMAW} 
to perform pairwise sequence comparison for a set of sequences using minimal absent words.
\textsf{scMAW} uses programme \textsf{MAW}~\cite{MAW} for linear-time and linear-space computation of minimal absent words using suffix array. 
\textsf{scMAW} was implemented in the $\textsf{C}$ programming language and developed under GNU/Linux operating system. 
It takes, as input argument, a file in MultiFASTA format with the input sequences, and then any of the two methods, for {\em linear} or {\em circular} sequence comparison, 
can be applied. It then produces a file in PHYLIP format with the distance matrix as output.
Cell $[x,y]$ of the matrix stores $\LW(\mina_{x},\mina_{y})$ (or $\LW(\mina_{\tilde{x}},\mina_{\tilde{y}})$ for the circular case).
The implementation is distributed under the GNU General Public License (GPL), and it is available at \url{http://github.com/solonas13/maw}, 
which is set up for maintaining the source code and the man-page documentation. 
Notice that {\em all} input datasets and the produced outputs referred to in this section are publicly maintained at the same web-site.

An important feature of the proposed algorithms is that they require space linear in the length of the sequences 
(see Theorem~\ref{the:maw} and Theorem~\ref{the:cmaw}). Hence, we were also able to implement \textsf{scMAW}
using the Open Multi-Processing (OpenMP) PI for shared memory multiprocessing programming to distribute the workload 
across the available processing threads without a large memory footprint.

\noindent \textbf{Application.}
Recently, there has been a number of studies on the biological significance of absent words in various species~\cite{nullrly,minabpro,Silva02042015}.
In~\cite{minabpro}, the authors presented dendrograms from dinucleotide relative abundances in sets of minimal absent words for prokaryotes and eukaryotic genomes.
The analyses support the hypothesis that minimal absent words are inherited through a common ancestor, in addition to lineage-specific inheritance, 
only in vertebrates. Very recently, in~\cite{Silva02042015}, it was shown that there exist three minimal words in the Ebola virus genomes which are absent from human genome.
The authors suggest that the identification of such species-specific sequences may prove to be useful for the development of both diagnosis and therapeutics.

In this section, we show a potential application of our results for the construction of dendrograms for DNA sequences with circular structure.
Circular DNA sequences can be found in viruses, as plasmids in archaea and bacteria, and in the mitochondria and plastids of eukaryotic cells.
Circular sequence comparison thus finds applications in several contexts such as reconstructing phylogenies using viroids RNA~\cite{conf/gcb/MosigHS06} or Mitochondrial DNA (MtDNA)~\cite{mtDNA-phylo}. 
Conventional tools to align circular sequences could yield an incorrectly high genetic distance between closely-related species. Indeed, when sequencing
molecules, the position where a circular sequence starts can be totally arbitrary. Due to this {\em arbitrariness}, a suitable rotation of one sequence would give much better results 
for a pairwise alignment~\cite{SEA2015,WABI2015}. In what follows, we demonstrate the power of minimal absent words to pave a path to resolve 
this issue by applying Corollary~\ref{lem:circ} and Theorem~\ref{the:cmaw}.
Next we do not claim that a solid phylogenetic analysis is presented but rather an investigation for potential applications of our theoretical findings.

We performed the following experiment with synthetic data. 
First, we simulated a basic dataset of DNA sequences using INDELible~\cite{indelible}.
The number of taxa, denoted by $\alpha$, was set to $12$; 
the length of the sequence generated at the root of the tree, denoted by $\beta$, was set to 2500bp;
and the substitution rate, denoted by $\gamma$, was set to $0.05$. 
We also used the following parameters: a deletion rate, denoted by $\delta$, of $0.06$ \textit{relative} to substitution rate of $1$; 
and an insertion rate, denoted by $\epsilon$, of $0.04$ \textit{relative} to substitution rate of $1$. 
The parameters were chosen based on the genetic diversity standard measures observed for sets of MtDNA sequences from primates and mammals~\cite{SEA2015}. 
We generated another instance of the basic dataset, containing one {\em arbitrary} rotation of each of the $\alpha$ sequences from the basic dataset. 
We then used this randomized dataset as input to \textsf{scMAW} by considering $\LW(\mina_{\tilde{x}},\mina_{\tilde{y}})$ as the distance metric. 
The output of \textsf{scMAW} was passed as input to \textsf{NINJA}~\cite{ninja}, an efficient implementation of
neighbor-joining~\cite{NJ}, a well-established hierarchical clustering algorithm for inferring dendrograms (trees). 
We thus used \textsf{NINJA} to infer the respective tree $T_1$ under the neighbor-joining criterion.
We also inferred the tree $T_2$ by following the same pipeline, but by considering $\LW(\mina_{x},\mina_{y})$ as distance metric, 
as well as the tree $T_3$ by using the {\em basic} dataset as input of this pipeline and $\LW(\mina_{\tilde{x}},\mina_{\tilde{y}})$ as distance metric.
Hence, notice that $T_3$ represents the original tree. 
Finally, we computed the pairwise Robinson-Foulds (RF) distance~\cite{RFdistance} between: $T_1$ and $T_3$; and $T_2$ and $T_3$.

Let us define \textit{accuracy} as the difference between 1 and the relative pairwise RF distance.
We repeated this experiment by simulating different datasets $<\alpha,\beta,\gamma,\delta,\epsilon>$ and measured the corresponding accuracy.  
The results in Table~\ref{tab:accuracy} (see $T_1$ vs. $T_3$) suggest that by considering $\LW(\mina_{\tilde{x}},\mina_{\tilde{y}})$ we can always 
re-construct the original tree even if the sequences have first been arbitrarily rotated (Corollary~\ref{lem:circ}). 
This is not the case (see $T_2$ vs. $T_3$) if we consider $\LW(\mina_{x},\mina_{y})$. Notice that $100\%$ accuracy denotes a (relative) pairwise RF distance of 0.
\begin{table}[!t]
\begin{center}
\scalebox{0.7}{
\begin{tabular}{l|c|c} \hline
Dataset $<\alpha,\beta,\gamma,\delta,\epsilon>$	& $T_1$ vs. $T_3$ &  $T_2$ vs. $T_3$ \\ \hline
$<12,2500,0.05,0.06,0.04>$		& 100\%		& 100\%\\
$<12,2500,0.20,0.06,0.04>$		& 100\%		& 88,88\%\\
$<12,2500,0.35,0.06,0.04>$		& 100\%		& 100\%\\
$<25,2500,0.05,0.06,0.04>$		& 100\%		& 100\%\\
$<25,2500,0.20,0.06,0.04>$		& 100\%		& 100\%\\
$<25,2500,0.35,0.06,0.04>$		& 100\%		& 100\%\\
$<50,2500,0.05,0.06,0.04>$		& 100\%		& 97,87\%\\
$<50,2500,0.20,0.06,0.04>$		& 100\% 	& 97,87\%\\
$<50,2500,0.35,0.06,0.04>$		& 100\%		& 100\%\\ \hline
\end{tabular}
}
\end{center}
\caption{Accuracy measurements based on relative pairwise RF distance}
\label{tab:accuracy}
\end{table}

%%%%%%%%%%%%%%%%%%%%%%%%%%%%%%%%%%%%%%%%%%%%%%%%%%%%%%%%%%%%%%%%%%%%%%%%%%%%%%%%%%%%%%%%%%%%%%%%%%%%%%
\section{Final Remarks}
%%%%%%%%%%%%%%%%%%%%%%%%%%%%%%%%%%%%%%%%%%%%%%%%%%%%%%%%%%%%%%%%%%%%%%%%%%%%%%%%%%%%%%%%%%%%%%%%%%%%%%

In this article, complementary to measures that refer to 
the composition of sequences in terms of their constituent patterns, we considered sequence comparison using
minimal absent words, information about what does not occur in the sequences.
We presented the first linear-time and linear-space algorithm to compare two sequences by considering {\em all} their minimal absent words (Theorem~\ref{the:maw}).
In the process, we presented some results of combinatorial interest, and also extended the proposed techniques to circular sequences.
The power of minimal absent words is highlighted by the fact that they provide a tool for sequence comparison that is as efficient for circular as it is for linear 
sequences (Corollary~\ref{lem:circ} and Theorem~\ref{the:cmaw}); whereas, this is not the case, for instance, using the general edit distance model~\cite{Maes}.
Finally, a preliminary experimental study shows the potential of our theoretical findings.

Our immediate target is to consider the following {\em incremental} version of the same problem: given an appropriate encoding of a comparison between sequences $x$ and $y$, 
can one incrementally compute the answer for $x$ and $ay$, and the answer for $x$ and $ya$, efficiently, where $a$ is an additional letter?
Incremental sequence comparison, under the edit distance model, has already been considered in~\cite{Landau1998}.

In~\cite{WABI2015}, the authors considered a more powerful generalization of the $q$-gram distance (see~\cite{U92} for definition) to compare $x$ and $y$. 
This generalization comprises partitioning $x$ and $y$ in $\beta$ blocks each, as evenly as possible, computing the $q$-gram distance between the corresponding block pairs, 
and then summing up the distances computed blockwise to obtain the new measure. We are also planning to apply this generalization to the similarity measure studied here
and evaluate it using real and synthetic data.

%%%%%%%%%%%%%%%%%%%%%%%%%%%%%%%%%%%%%%%%%%%%%%%%%%%%%%%%%%%%%%%%%%%%%%%%%%%%%%%%%%%%%%%%%%%%%%%%%%%%%%
\section*{Acknowledgements}
%%%%%%%%%%%%%%%%%%%%%%%%%%%%%%%%%%%%%%%%%%%%%%%%%%%%%%%%%%%%%%%%%%%%%%%%%%%%%%%%%%%%%%%%%%%%%%%%%%%%%%

We warmly thank Alice Heliou for her inestimable code contribution and Antonio Restivo for useful discussions. 
Gabriele Fici's work was supported by the PRIN 2010/2011 project ``Automi e Linguaggi Formali: Aspetti 
Matematici e Applicativi'' of the Italian Ministry of Education (MIUR) and by the ``National Group for Algebraic and Geometric Structures, and their Applications'' (GNSAGA -- INdAM).
Robert Merca{\c s}'s work was supported by the P.R.I.M.E. programme of  DAAD co-funded by BMBF and EU's 7th Framework Programme (grant 605728).
Solon  P.  Pissis's  work was  supported  by  a  Research  Grant (\#RG130720) awarded by the Royal Society.

\bibliographystyle{splncs03}

\end{document}